\newtheorem{theorem}{Theorem}
\newtheorem{lemma}{Lemma}
\newtheorem{corollary}{Corollary}  
\newtheorem{remark}{Remark}
\title{ Solutions to Integrals Involving the  Marcum $Q-$Function and Applications}
\author{
Paschalis C. Sofotasios, Sami~Muhaidat, George K. Karagiannidis\\ and Bayan S. Sharif 

\thanks{P. C. Sofotasios is with the Department of Electronics and Communications Engineering, Tampere University of Technology, 33101 Tampere, Finland and with the Department of Electrical and Computer Engineering, Aristotle University of Thessaloniki, 54124 Thessaloniki, Greece  \, (e-mail: p.sofotasios@ieee.org)  }

\thanks{S. Muhaidat is with the Department of Electrical and Computer Engineering, Khalifa University, PO Box 127788, Abu Dhabi, UAE and with
the Centre for Communication Systems Research, Department of Electronic Engineering, University of Surrey, GU2 7XH, Guildford, U.K. (e-mail:
muhaidat@ieee.org)}

\thanks{G. K. Karagiannidis is with the Department of Electrical and Computer Engineering, Khalifa University, PO Box 127788
Abu Dhabi, UAE and with the Department of Electrical and Computer Engineering, Aristotle University of Thessaloniki, 54124 Thessaloniki, Greece \, (e-mail: geokarag@ieee.org)}

\thanks{ B. S. Sharif,  is  with the Department of Electrical and Computer Engineering, Khalifa University, P.O. Box 127788, Abu
Dhabi, UAE. (e-mail:  bayan.sharif@kustar.ac.ae)} 
}
\begin{document}
\maketitle

\begin{abstract}
Novel analytic solutions are derived for  integrals that involve  the generalized Marcum $Q-$function, exponential functions and arbitrary powers.  Simple closed-form expressions are also derived for specific cases of the generic integrals. The offered expressions  are both convenient and versatile, which is particularly useful in applications relating to natural sciences and engineering,  including wireless communications and signal processing.  To this end, they  are employed in the derivation of the average probability of detection in energy detection of unknown signals over  multipath fading  channels as well as  of the channel capacity with fixed rate and channel inversion in  the case of correlated multipath fading and switched diversity. 
\end{abstract}

\begin{keywords}
Marcum $Q-$function, energy detection, switch-and-stay combining, correlation, special functions. 
\end{keywords}

\section{Introduction}

The  generalized Marcum $Q{-}$function, $Q_{m}(a,b)$,  has been extensively involved in numerous areas of wireless communications including  digital communications over fading channels, information-theoretic analysis of multi-antenna systems, cognitive radio communications, radar systems,  \cite{J:Marcum_2, J:Nuttall_2, J:Simon, B:Alouini,  J:Brychkov2012, Bao, J:Himal, Ding} and references therein. Furthermore, its use  has enabled  the derivation of several tractable analytic expressions for important performance measures in communication theory 
\cite{B:Alouini}.  

The  derivation of tractable analytic expressions in natural sciences and engineering is typically a tedious, if not impossible, task because   cumbersome integrals   are  often encountered  \cite{J:Alouini, Math_7, J:Ran, Add, Ermolova, New_1, New_2, New_3, New_4, New_5, New_6, New_7, New_8, New_9}. This is also the case when the   Marcum $Q{-}$function is involved in integrands along with exponential functions and arbitrary power terms. Two such integrals  are the following:   

\begin{equation} \label{first}
 \mathcal{G}(k, m, a, b, p) =   \int_{0}^{\infty} x^{k-1} Q_{m}(a, b\sqrt{x}) e^{-px}{\rm d}x 
\end{equation}
and

\begin{equation} \label{Integral}
\mathcal{F}(k, m, a, b, p) =   \int_{0}^{\infty} x^{k-1} Q_{m}(a\sqrt{x}, b) e^{-px}{\rm d}x.  
\end{equation}
These  integrals have been widely employed in    the analysis of multi-channel  receivers with non-coherent and differentially coherent detection as well as  in the detection of unknown signals in   cognitive radio and radar systems  \cite{J:Alouini,  J:Janti, C:Beaulieu, J:Herath, Add_4, Maged, New_10, New_11, New_12, New_13, New_14, New_15, New_16, New_17, New_18, New_19, New_20} and the references therein. Based on this,  a recursive formula for \eqref{Integral}, that is restricted to only integer values of $k$ and $m$, was firstly reported in  \cite{J:Nuttall_2}. Likewise,   exact infinite series for \eqref{first} and \eqref{Integral}  were proposed in \cite{J:Ran} while a closed-form solution to \eqref{Integral} for integer values of $k$ was recently reported  in \cite{Add}.

Nevertheless, the existing expressions  for \eqref{first} and \eqref{Integral} are subject to validity restrictions, which limit the generality of the involved parameters and often render them inconvenient to use  in  applications of interest.  
  Motivated by this, the present work is devoted to the derivation of novel  closed-form expressions for  \eqref{first} and \eqref{Integral}, which are  more generic  and have a relatively tractable algebraic representation. These characteristics  render them   useful in several  analyses in natural sciences and engineering, including the broad areas  of   wireless communications and signal processing. To this end, they are subsequently employed in the derivation of  closed-form expressions for   the  following  indicative applications: 
$i)$ the average probability of detection  in energy detection  over Nakagami$-m$  multipath  fading channels which, unlike previous analyses, is valid for arbitrary values of $m$;  
 $ii)$ the channel capacity with channel inversion  and fixed rate in  arbitrarily correlated Nakagami$-m$ fading  conditions using switch-and-stay combining.  The derived expressions are validated  extensively through  comparisons with respective computer simulations results.

\section{Analytic Solutions to  Integrals Involving Power, Exponential and Marcum $Q-$functions}

\subsection{Closed-form Solutions to $\mathcal{G}(k, m, a, b, p) $} 

\begin{theorem}
For   $a, b \in \mathbb{R}$, $m \in \mathbb{N}$ and  $k, p \in \mathbb{R^{+}}$, the following closed-form representation  holds

\begin{equation} \label{G_1} 
 \mathcal{G}(k, m, a, b, p) =   \frac{\Gamma(k)}{p^{k}} -   \left( \frac{2}{b^{2} + 2p} \right)^{k} e^{-\frac{a^{2}}{2}} \Gamma(k) \qquad \qquad \qquad   \qquad \qquad \qquad   \qquad \qquad \qquad   \qquad \qquad \qquad   
  \end{equation}
  \begin{equation*}
\times \left[ \Phi_{1}\left(k, 1, 1; \frac{b^{2}}{b^{2} + 2p}, \frac{a^{2} b^{2}}{2b^{2} + 4p} \right) - \sum_{n = 0}^{m - 1} \frac{(k)_{n}}{n!} \left( \frac{b^{2}}{b^{2} + 2p} \right)^{n} \, _{1}F_{1}\left( k + n; n + 1; \frac{a^{2} b^{2}}{2b^{2} + 4p}\right) \right] 
  \end{equation*}

  \noindent
 where $\Gamma(\cdot)$,  $\Phi_{1}(\cdot)$ and $_{1}F_{1}(\cdot)$ denote the Euler Gamma function, the Humbert  hypergeometric function of the first kind and the  Kummer  hypergeometric function, respectively \cite{Yury_1}. 
\end{theorem}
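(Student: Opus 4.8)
The plan is to evaluate the integral
\[
\mathcal{G}(k, m, a, b, p) = \int_{0}^{\infty} x^{k-1} Q_{m}(a, b\sqrt{x}) e^{-px}\,{\rm d}x
\]
by exploiting the complementary structure of the Marcum $Q-$function. First I would write $Q_m(a,b\sqrt{x}) = 1 - (1 - Q_m(a,b\sqrt{x}))$, so that the constant part contributes $\int_0^\infty x^{k-1}e^{-px}\,{\rm d}x = \Gamma(k)/p^k$, which is exactly the leading term in \eqref{G_1}. This isolates the nontrivial task: integrating $x^{k-1} e^{-px}$ against $1 - Q_m(a,b\sqrt{x})$, the latter being the CDF of a (scaled) noncentral chi-square type variable in its second argument. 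The factor $e^{-a^2/2}\Gamma(k)$ appearing out front strongly suggests that the working representation of the Marcum function will carry that exponential explicitly.

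Next I would substitute a suitable series representation of $Q_m(a,b\sqrt{x})$. The natural choice is the canonical expansion
\[
Q_m(a,b\sqrt{x}) = e^{-a^2/2}\sum_{j=0}^{\infty}\frac{(a^2/2)^j}{j!}\,\frac{\Gamma(m+j, b^2 x/2)}{\Gamma(m+j)},
\]
or equivalently the form $1 - Q_m = e^{-a^2/2}\sum_j \frac{(a^2/2)^j}{j!}\frac{\gamma(m+j, b^2 x/2)}{\Gamma(m+j)}$ written through the lower incomplete gamma function. Inserting the lower-incomplete-gamma version and interchanging summation and integration reduces the problem to a single tabulated integral of the form $\int_0^\infty x^{k-1} e^{-px}\gamma(\nu, cx)\,{\rm d}x$, which is classical and evaluates to a product of $\Gamma(k+\nu)$ (or $(k)_\nu$ type Pochhammer factors) with a Gauss or Kummer hypergeometric function in $c/(c+p)$. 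Here $c = b^2/2$, and after simplification the ratio $c/(c+p)$ becomes $b^2/(b^2+2p)$, matching the arguments in \eqref{G_1}, while the prefactor $(2/(b^2+2p))^k$ emerges from collecting the $p^{-k}$-type powers.

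The remaining step is to recognize the resulting double sum. The outer sum over $j$ (from the Marcum expansion) combined with the inner $_1F_1$ from each integral produces a double hypergeometric series in the two variables $b^2/(b^2+2p)$ and $a^2 b^2/(2b^2+4p)$; I would match this against the defining double series of the Humbert function $\Phi_1(\alpha,\beta,\gamma;w,z)=\sum_{r,s}\frac{(\alpha)_{r+s}(\beta)_r}{(\gamma)_{r+s}\,r!\,s!}w^r z^s$ to identify the full infinite-$m$ result as $\Phi_1(k,1,1;\,\cdot\,,\,\cdot\,)$. For finite $m$, the Marcum function truncates (or rather splits) the series, and the correction is precisely the finite sum over $n=0,\dots,m-1$ of the $_1F_1$ terms shown in the bracket. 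The main obstacle I anticipate is this last identification: carefully managing the double-index rearrangement so that the Pochhammer symbols $(k)_{r+s}/(1)_{r+s}$ line up with the Humbert definition, and confirming that the finite-$m$ truncation corresponds exactly to the subtracted sum rather than to a shifted or reindexed series. Getting the index bookkeeping and the Pochhammer/factorial normalizations consistent between the $\Phi_1$ term and the subtracted $_1F_1$ terms is where the delicate work lies.
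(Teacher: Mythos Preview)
Your plan is correct and will produce \eqref{G_1}; it is, however, a genuinely different route from the paper's. The paper does not integrate anything: it simply quotes the infinite series for $\mathcal{G}(k,m,a,b,p)$ already derived in \cite[eq.~(10)]{J:Ran}, which is of the form $\Gamma(k)/p^{k}$ plus a finite sum of ${}_1F_1$'s minus an infinite sum of ${}_1F_1$'s, and then shows that the infinite part can be repackaged as $\Phi_1(k,1,1;\cdot,\cdot)$ by expanding each ${}_1F_1$ and using $(k+n)_l=(k)_{n+l}/(k)_n$, $(1+n)_l=(1)_{n+l}/(1)_n$. So the paper's proof is essentially a series identification on top of a cited result.

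Your approach is self-contained: you recover the $\Gamma(k)/p^{k}$ term from the split $Q_m = 1-(1-Q_m)$, then expand $1-Q_m$ via lower incomplete gammas and integrate $\int_0^\infty x^{k-1}e^{-px}\gamma(m+j,\tfrac{b^2}{2}x)\,{\rm d}x$ term by term. Two small remarks. First, that integral yields a Gauss ${}_2F_1$, not a Kummer ${}_1F_1$ (you hedge ``Gauss or Kummer'' and later write ${}_1F_1$); the relevant formula is \cite[eq.~(6.455.2)]{B:Ryzhik}, and expanding that ${}_2F_1(1,k+m+j;m+j+1;w)$ with $w=b^2/(b^2+2p)$ gives exactly the inner series you need. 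Second, the way the finite-$m$ correction appears in your route is slightly different from how you phrase it: the $m$ does not ``truncate'' the Marcum expansion but enters as the shift in $\gamma(m+j,\cdot)$; after reindexing $n=m+r$ your double sum runs over $n\ge m$, and subtracting it from the full $n\ge 0$ range of $\Phi_1$ is what produces the finite $\sum_{n=0}^{m-1}$ of ${}_1F_1$'s. The Pochhammer bookkeeping you flag as the delicate step is precisely the identity pair the paper uses, so once you reach the double sum the two proofs converge. Your version buys independence from \cite{J:Ran}; the paper's buys brevity.
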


\begin{proof}
The   series in \cite[eq. (10)]{J:Ran} can be  re-written  as follows\footnote{Equation (10) in  \cite{J:Ran} contains a typo since the $(a^{2} + 2p)$ term in the denominator should read as $(b^{2} + 2p)$. This  has been corrected in \eqref{Analysis_1}.}    

\begin{equation} \label{Analysis_1}
\begin{split}
\mathcal{G}(k, m, a, b, p) =&  \sum_{n=0}^{m-1} \frac{ \Gamma(k + n)   \, _{1}F_{1}\left( k + n; n + 1; \frac{a^{2}b^{2}}{2b^{2} + 4p} \right) }{n! b^{-2n} 2^{-k}   (b^{2} + 2p)^{k+n} e^{\frac{a^{2}}{2}} }    + \frac{\Gamma(k)}{p^{k}}  \\
& - \underbrace{ \frac{2^{k}}{e^{\frac{a^{2}}{2}}} \sum_{n=0}^{\infty} \frac{ \Gamma(k + n)  b^{2n}   \, _{1}F_{1}\left( k + n; n + 1; \frac{a^{2}b^{2}}{2b^{2} + 4p} \right) }{n!   (b^{2} + 2p)^{k+n}  }}_{\mathcal{A}(k, a, b, p)}. 
\end{split}
\end{equation}
Using  the infinite series in  \cite[eq. (9.14.1)]{B:Ryzhik}, it follows that

\begin{equation} \label{scalar_1}
\mathcal{A}(k, a, b, p) = \sum_{n = 0}^{\infty}\sum_{l = 0}^{\infty} \frac{ a^{2l} b^{2n + 2l}  \Gamma(n + k) e^{-\frac{a^{2}}{2}} (k + n)_{l} }{  n! l! 2^{l - k}  (b^{2} + 2p)^{n + k + l}   (1+n)_{l}}   
\end{equation}
where $(x)_{n}$ denotes the Pochhammer symbol. Given that  

\begin{equation}
(k + n)_{l} =   \frac{(k)_{n+l}}{(k)_{n}}
\end{equation}
 and 
 \begin{equation}
 (1 + n)_{l} =  \frac{(1)_{n+l}}{(1)_{n}}
 \end{equation}
  one obtains

\begin{equation} \label{Humbert_1}
\mathcal{A}(k, a, b, p) = \mathcal{B} \sum_{n = 0}^{\infty}\sum_{l = 0}^{\infty}  \frac{(k)_{n+l} (1)_{n}}{(1)_{n+l}} \frac{\left( \frac{b^{2}}{b^{2} + 2p} \right)^{n}}{n!} \frac{\left(\frac{a^{2} b^{2}}{2b^{2} + 4p} \right)^{l}}{l!} 
\end{equation}
where
\begin{equation}
\mathcal{B} = \frac{2^{k} \Gamma(k)  \exp(a^{2}/2)}{ (b^{2} + 2p)^{k}}
\end{equation}
Notably,  eq. \eqref{Humbert_1} can be expressed  in terms of the  Humbert function, $\Phi_1$,  namely

\begin{equation} \label{scalar_2} 
\mathcal{A}(k, a, b, p) =  \frac{2^{k} \, \Gamma(k) \, e^{-\frac{a^{2}}{2}} }{(b^{2} + 2p)^{k} \,  }  \Phi_{1}\left(k, 1, 1; \frac{b^{2}}{b^{2} + 2p}, \frac{a^{2} b^{2}}{2b^{2} + 4p} \right)
\end{equation}
Inserting \eqref{scalar_2} in \eqref{Analysis_1} yields \eqref{G_1}, which completes  the proof. 
\end{proof}
It is noted that  Humbert functions and their properties   have been  studied extensively  over the past decades  \cite{B:Ryzhik, Yury_1}.

\begin{theorem}
For $a, b \in \mathbb{R}$ and $m, p \in \mathbb{R}^{+}$ and $k \in \mathbb{N}$, the following closed-form expression holds

\begin{equation} \label{deuteri}  
\mathcal{G}(k, m, a, b, p) = \frac{\Gamma(k)}{p^{k}}  - \frac{\Gamma(k) \, b^{2m} e^{-\frac{a^{2}}{2}}   }{p^{k} (b^{2} + 2p)^{m}}     \sum_{l = 0}^{k-1} \frac{ (m)_{l} (2p)^{l}   }{l!    (b^{2} + 2p)^{l} }     \,_{1}F_{1}\left(l +m; m; \frac{a^{2} b^{2}}{2b^{2} + 4p} \right). 
\end{equation}
\end{theorem}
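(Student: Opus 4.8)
The plan is to exploit the fact that in this statement $k$ is a positive integer (while $m$ is now only required to be real positive), so a termwise evaluation becomes available once $Q_{m}(a,b\sqrt{x})$ is expanded in its canonical series. Concretely, I would first write the Marcum $Q-$function as the incomplete-Gamma series
\begin{equation*}
Q_{m}(a,b\sqrt{x}) = e^{-\frac{a^{2}}{2}}\sum_{n=0}^{\infty}\frac{(a^{2}/2)^{n}}{n!}\,\frac{\Gamma\!\left(m+n,\tfrac{b^{2}x}{2}\right)}{\Gamma(m+n)},
\end{equation*}
which holds for \emph{arbitrary} real $m>0$ (this is precisely what removes the integer-$m$ restriction of Theorem 1), and insert it into \eqref{first}. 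After interchanging the summation and the integration, justified by absolute convergence on $(0,\infty)$, the task reduces to the single kernel integral $\int_{0}^{\infty}x^{k-1}e^{-px}\,\Gamma\!\left(m+n,\tfrac{b^{2}x}{2}\right){\rm d}x$.

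Next I would evaluate this kernel in closed form using the integrality of $k$. The cleanest route is to start from the elementary Laplace transform $\int_{0}^{\infty}e^{-px}\Gamma(s,cx)\,{\rm d}x = \frac{\Gamma(s)}{p}\big[1-(c/(p+c))^{s}\big]$, obtained by a single integration by parts, and then to use $\int_{0}^{\infty}x^{k-1}e^{-px}(\cdot)\,{\rm d}x = (-1)^{k-1}\partial_{p}^{\,k-1}\int_{0}^{\infty}e^{-px}(\cdot)\,{\rm d}x$. Applying the Leibniz rule to $\partial_{p}^{\,k-1}\big[p^{-1}(p+c)^{-s}\big]$ with $s=m+n$ and $c=b^{2}/2$ produces a \emph{finite} sum over an index $l=0,\dots,k-1$, whose coefficients carry the Pochhammer symbols $(m+n)_{l}$ together with powers of $p$ and of $(b^{2}+2p)$. (Equivalently one may integrate \eqref{first} by parts $k$ times; both yield the same finite-sum kernel.)

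The final step is to resum over $n$. The only factor obstructing immediate recognition of a Kummer function is the mixed Pochhammer term $(m+n)_{l}$; I would remove it through the identity $(m+n)_{l}=(m)_{l}\,(m+l)_{n}/(m)_{n}$, after which the $n$-series becomes $\sum_{n\ge 0}\frac{(m+l)_{n}}{(m)_{n}}\frac{z^{n}}{n!}={}_{1}F_{1}\!\left(m+l;m;z\right)$ with $z=\tfrac{a^{2}b^{2}}{2b^{2}+4p}$, exactly the argument in \eqref{deuteri}. The leading $\Gamma(k)/p^{k}$ term arises from the $l$-independent piece of the kernel combined with $e^{-a^{2}/2}\sum_{n}(a^{2}/2)^{n}/n!=1$. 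Gathering the constants $p^{k}$, $(b^{2}+2p)^{m}$ and the factor $(2p)^{l}=2^{l}p^{l}$ then reproduces \eqref{deuteri}.

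I expect the principal difficulty to be bookkeeping rather than conceptual: carefully tracking the Leibniz expansion of $\partial_{p}^{\,k-1}\big[p^{-1}(p+c)^{-s}\big]$ and the accompanying powers of $2$, so that the factor $(2p)^{l}/(b^{2}+2p)^{l}$ emerges correctly, and confirming that the interchange of the finite $l$-sum with the infinite $n$-sum is legitimate (the $l$-sum being finite, this is immediate once each ${}_{1}F_{1}$ converges). As a consistency check I would verify the case $k=1$, where the $l$-sum collapses to its single $l=0$ term and ${}_{1}F_{1}(m;m;z)=e^{z}$ recovers the compact form $\mathcal{G}(1,m,a,b,p)=\tfrac{1}{p}-\tfrac{1}{p}e^{-\frac{a^{2}}{2}}\big(\tfrac{b^{2}}{b^{2}+2p}\big)^{m}e^{z}$, in agreement with \eqref{G_1} specialised to $k=1$.
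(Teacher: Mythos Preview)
Your proposal is correct and carries through to \eqref{deuteri}; the bookkeeping you flag indeed works out, with the Leibniz expansion of $(-1)^{k-1}\partial_{p}^{\,k-1}\big[p^{-1}(p+c)^{-s}\big]$ producing exactly $\frac{\Gamma(k)}{p^{k}(p+c)^{s}}\sum_{l=0}^{k-1}\frac{(s)_{l}}{l!}\big(\frac{p}{p+c}\big)^{l}$, and the Pochhammer identity $(m+n)_{l}=(m)_{l}(m+l)_{n}/(m)_{n}$ then collapsing the $n$-sum into ${}_{1}F_{1}(m+l;m;z)$ as required.

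The route, however, is genuinely different from the paper's. The paper does \emph{not} expand $Q_{m}(a,b\sqrt{x})$ as a power series in $a^{2}$. Instead it integrates \eqref{first} by parts once in $x$, obtaining $\Gamma(k,px)$ as the antiderivative of $x^{k-1}e^{-px}$; the integrality of $k$ enters through the finite-sum representation $\Gamma(k,px)=(k-1)!e^{-px}\sum_{l=0}^{k-1}(px)^{l}/l!$, and the derivative $\frac{\rm d}{{\rm d}x}Q_{m}(a,b\sqrt{x})$ is taken from Nuttall's Bessel formula. This leaves a single integral of the form $\int_{0}^{\infty}u^{\alpha}e^{-\beta u^{2}}I_{m-1}(au)\,{\rm d}u$, which Gradshteyn--Ryzhik~6.621.1 evaluates directly as a ${}_{1}F_{1}$. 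Thus the paper avoids any infinite-series expansion and resummation, at the cost of invoking a tabulated Bessel integral; your approach is more self-contained (only Gamma-function Laplace transforms and elementary identities), and makes the appearance of ${}_{1}F_{1}(m+l;m;\cdot)$ transparent as the resummation of the $a^{2}$-series, but requires the extra swap-and-resum step. Both arguments exploit $k\in\mathbb{N}$ in equivalent ways: you through a finite-order $\partial_{p}$, the paper through the finite expansion of $\Gamma(k,\cdot)$.
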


\begin{proof}
By integrating \eqref{first}   by parts, it follows that

\begin{equation}\label{T_2a}
\mathcal{G}(k, m, a, b, p)  =      \lim_{x \to \infty }  \, g(x)  \int \frac{x^{k} \, {\rm d}x}{x \, e^{px}}     -  \lim_{x \to 0}  \, g(x)  \int \frac{x^{k} \, {\rm d}x}{x \, e^{px}}    - \int_{0}^{\infty}  \left[ \int \frac{x^{k-1}}{ e^{px}} {\rm d}x \right]\, \frac{{\rm d}}{{\rm d}x}  Q_{m}(a, b \sqrt{x}) {\rm d}x   
\end{equation}
\noindent 
where 

\begin{equation}
g(x) = Q_{m}(a, b\sqrt{x})
\end{equation}

The   integrals in \eqref{T_2a} can be expressed in  terms of the  incomplete gamma function yielding
 
\begin{equation}\label{T_2b}
\mathcal{G}(k, m, a, b, p)  =    \lim_{x \to \infty} \,  g(x) \frac{\Gamma(k, p x)}{p^{k}}  -  \lim_{x \to 0}  \, g(x)  \frac{\Gamma(k, p x)}{p^{k}}  - \frac{1}{p^k} \int_{0}^{\infty}    \Gamma(k, p x) \, \frac{{\rm d}}{{\rm d}x} Q_{m}(a, b\sqrt{x}) {\rm d}x. 
\end{equation}
Recalling the identities 

\begin{equation}
\Gamma(a,\infty) = 0,
\end{equation}

\begin{equation}
\Gamma(a,0) = \Gamma(a)
\end{equation}
and

\begin{equation}
Q_{m}(a, 0) = 1
\end{equation}
  as well as setting $u = \sqrt{x}$ in \eqref{T_2b} along with utilizing  \cite[eq. (9)]{J:Nuttall_2} and \cite[eq. (8.352.4)]{B:Ryzhik}, it follows that
 
\begin{equation}\label{T_2c}
\mathcal{G}(k, m, a, b, p)  =    \frac{\Gamma(k)}{p^{k}} \times  \left[ 1  -    a^{1-m} e^{-\frac{a^{2}}{2}} \sum_{l=0}^{k-1} \frac{   p^l}{l!  b^{2l}} \int_{0}^{\infty} x^{m-2l}    e^{  - \left( \frac{p}{b^{2}} + \frac{1}{2} \right) x^{2}}     I_{m-1}(ax)     \, {\rm d}x   \right].  
\end{equation} 
The  above integral can be expressed in closed-form using \cite[eq. (6.621.1)]{B:Ryzhik}.  To this effect, eq.  \eqref{T_2c} is deduced, which completes the proof. 
\end{proof}

\subsection{Closed-form Solutions to $\mathcal{F}(k, m, a, b, p) $}

A closed-form expression for \eqref{Integral} was reported  in  \cite{Add} for the case that $k$ is integer and $m$ is   arbitrary real,  namely

\begin{equation} \label{crowncom}
\begin{split}
\mathcal{F}(k, m, a, b, p) =& \frac{\Gamma(k)  \Gamma\left(m, \frac{b^{2}}{2} \right)  }{p^{k} \Gamma(m)}   \\
& +  \frac{a^{2} b^{2m} \Gamma(k)  e^{-\frac{b^2}{2}} }{m! p^{k} 2^{m}(a^{2} + 2p)} \sum_{l = 0}^{k - 1} \left( \frac{2p}{a^{2} + 2p} \right)^{l}   \, _{1}F_{1}\left(l + 1; m + 1; \frac{a^{2} b^{2}}{2 a^{2} + 4p} \right).  
\end{split}
\end{equation}
\noindent 
Likewise, in what follows we derive a closed-form expression  for the useful   case that $m$ is  integer and $k$ is   arbitrary real.

\begin{theorem}
For  $k, p \in \mathbb{R^{+}}$, $m \in \mathbb{N}$ and $a, b \in \mathbb{R}$, the following closed-form expression is valid
 
\begin{equation} \label{T_1}
\mathcal{F}(k, m, a, b, p) =  \frac{\Gamma(k)}{p^{k}}  - \frac{2^{k} \Gamma(k)  e^{- \frac{b^{2}}{2}} \Phi_{2}\left( 1, k, 1; \frac{b^2}{2}, \frac{a^{2} b^{2}}{2 a^{2} + 4p}\right) }{(a^{2} + 2p)^{k}}  \qquad \qquad \qquad 
  \end{equation}
  \begin{equation*}
\qquad \quad  + \frac{\Gamma(k) 2^{k}  e^{-\frac{b^2}{2}}}{(a^{2} + 2p)^{k}} \sum_{n = 0}^{m-1} \frac{  b^{2n} }{n! 2^{n} }    \, _{1}F_{1}\left(k; n + 1; \frac{a^{2} b^{2}}{2a^{2} + 4p} \right)  
  \end{equation*}
  where   $\Phi_{2}(\cdot)$ is  the Humbert    function of the second kind  \cite{B:Ryzhik}. 
\end{theorem}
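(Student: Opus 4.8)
The plan is to avoid the integration-by-parts route of the previous theorems (which is inconvenient here because $k$ is an arbitrary positive real, so no finite number of integrations reduces the power term) and instead expand the Marcum function directly in its \emph{first} argument. I would begin from the Poisson-type series
\[
Q_{m}(a\sqrt{x}, b) = e^{-\frac{a^{2}x}{2}} \sum_{n=0}^{\infty} \frac{(a^{2}x/2)^{n}}{n!} \, \frac{\Gamma\!\left(m+n, \tfrac{b^{2}}{2}\right)}{\Gamma(m+n)},
\]
insert it into \eqref{Integral}, interchange summation and integration, and evaluate each term as a Gamma integral, $\int_{0}^{\infty} x^{k+n-1} e^{-(p+a^{2}/2)x}\,{\rm d}x = \Gamma(k+n)/(p+a^{2}/2)^{k+n}$. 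After collecting powers of $2$ this yields the single series
\[
\mathcal{F}(k, m, a, b, p) = \frac{2^{k}}{(a^{2}+2p)^{k}} \sum_{n=0}^{\infty} \frac{\Gamma(k+n)}{n!\,\Gamma(m+n)} \, \Gamma\!\left(m+n, \tfrac{b^{2}}{2}\right) r^{n}, \qquad r := \frac{a^{2}}{a^{2}+2p}.
\]

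The hypothesis $m \in \mathbb{N}$ is what makes the rest tractable: since $m+n$ is then a positive integer, I would replace the upper incomplete gamma by its finite sum $\Gamma(m+n, s)/\Gamma(m+n) = e^{-s}\sum_{j=0}^{m+n-1} s^{j}/j!$ with $s := b^{2}/2$, and write $\Gamma(k+n) = \Gamma(k)(k)_{n}$. Splitting the truncated inner sum as $\sum_{j=0}^{m+n-1} = e^{s} - \sum_{j\geq m+n}$ separates $\mathcal{F}$ into two pieces. The ``complete'' piece collapses through the binomial series $\sum_{n}(k)_{n}r^{n}/n! = (1-r)^{-k} = ((a^{2}+2p)/2p)^{k}$, and a short simplification shows its contribution is exactly the leading term $\Gamma(k)/p^{k}$ of \eqref{T_1}. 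The whole problem thus reduces to recognising the remaining tail
\[
T := \sum_{n=0}^{\infty} \frac{(k)_{n}}{n!} r^{n} \sum_{q \geq m+n} \frac{s^{q}}{q!}
\]
as the bracketed combination of $\Phi_{2}$ and Kummer functions appearing in \eqref{T_1}.

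I expect this last identification to be the main obstacle, as it is a delicate rearrangement of a double series rather than a table lookup. The guiding facts are that the second Humbert argument factorises, $w := a^{2}b^{2}/(2a^{2}+4p) = rs$, and that both targets have elementary double-series forms: with $(1)_{i}=i!$ one has $\Phi_{2}(1,k,1;s,w) = \sum_{i,j\geq 0}\frac{(k)_{j}s^{i}w^{j}}{(i+j)!\,j!}$, while $(n+1)_{j}=(n+j)!/n!$ gives $\frac{s^{n}}{n!}{}_{1}F_{1}(k;n+1;w) = \sum_{j\geq 0}\frac{(k)_{j}s^{n}w^{j}}{(n+j)!\,j!}$. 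Subtracting the finite block $n=0,\dots,m-1$ of Kummer terms from $\Phi_{2}$ therefore deletes precisely the $i<m$ layers of the $\Phi_{2}$ series, leaving $\sum_{i\geq m}\sum_{j\geq 0}\frac{(k)_{j}s^{i}w^{j}}{(i+j)!\,j!}$. Finally, substituting $w=rs$ and reindexing by the total power $q=i+j$ of $s$ recasts this as $\sum_{q\geq m}\frac{s^{q}}{q!}\sum_{j=0}^{q-m}\frac{(k)_{j}}{j!}r^{j}$, which is exactly $T$ after swapping the order of summation in its definition (for fixed $q\geq m$ the inner index runs $0\leq n\leq q-m$). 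Restoring the common factor $2^{k}\Gamma(k)e^{-s}/(a^{2}+2p)^{k}$ then reproduces the two hypergeometric terms of \eqref{T_1} with the correct signs. The only analytic points I would flag are the two interchanges of summation (and the sum--integral swap), all legitimate since every summand is nonnegative and the series converge absolutely for $r<1$, i.e. for $p>0$.
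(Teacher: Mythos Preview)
Your argument is correct and complete: the Poisson series for $Q_{m}$, the term-by-term Gamma integral, the split of the truncated exponential sum, and the double-series identification of $T$ with $\Phi_{2}-\sum_{n<m}\frac{s^{n}}{n!}{}_{1}F_{1}$ all check out, and the signs and prefactors reassemble exactly into \eqref{T_1}. The absolute-convergence justification for the interchanges is fine since $0\le r<1$ and $s\ge 0$.

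The route is genuinely different from the paper's. The paper does not integrate at all: it quotes the ready-made series representation of $\mathcal{F}(k,m,a,b,p)$ from \cite[eq.~(9)]{J:Ran}, which already displays the term $\Gamma(k)/p^{k}$, the finite Kummer block $\sum_{n=0}^{m-1}$, and an infinite tail $\mathcal{H}$ given as a single sum of ${}_{1}F_{1}$'s indexed by the power of $b^{2}/2$. The only work left is to expand those ${}_{1}F_{1}$'s via \cite[eq.~(9.14.1)]{B:Ryzhik}, use $(n+1)_{l}=(1)_{n+l}/(1)_{n}$, and read off the $\Phi_{2}$ definition directly. Your derivation, by contrast, is self-contained: you regenerate the content of \cite[eq.~(9)]{J:Ran} from the Poisson expansion and the finite incomplete-gamma sum (this is precisely where the hypothesis $m\in\mathbb{N}$ enters in your version), and you recover the three-term structure of \eqref{T_1} via the split $\sum_{j<m+n}=e^{s}-\sum_{j\ge m+n}$ and a $q=i+j$ reindexing rather than by citation. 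The trade-off is that the paper's proof is a two-line repackaging once the external series is granted, whereas yours needs no outside input but pays for it with the additional splitting and Fubini-type rearrangement that you rightly flag as the main obstacle.
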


\begin{proof}
The  series in \cite[eq. (9)]{J:Ran} can be also expressed as
 
\begin{equation} \label{derivation}
\begin{split}
\mathcal{F}(k, m, a, b, p) =& \frac{\Gamma(k)}{p^{k}} + \sum_{n = 0}^{m-1} \frac{ \Gamma(k) _{1}F_{1}\left(k; n + 1; \frac{a^{2}b^{2}}{2a^{2} + 4p} \right)}{n! b^{-2n} 2^{n-k} (a^{2} + 2p)^{k}  e^{\frac{b^{2}}{2}}  }  \\
&  -  \underbrace{ \frac{\Gamma(k) 2^{k} e^{-\frac{b^2}{2}}}{(a^{2} + 2p)^{k}} \sum_{n = 0}^{\infty} \frac{ b^{2n}}{n!  2^{n}  }  \,   _{1}F_{1}\left(k; n + 1; \frac{a^{2}b^{2}}{2a^{2} + 4p} \right) }_{\mathcal{H}(k, a, b, p)}. 
\end{split}
\end{equation}
\noindent 
Applying  \cite[eq. (9.14.1)]{B:Ryzhik} in $\mathcal{H}(k, a, b, p)$  and given that   

\begin{equation}
(n+1)_{l} = \frac{\Gamma(n + l + 1)}{\Gamma(n+1)} = \frac{(1)_{n+l}}{(1)_{n}}
\end{equation}
one obtains
 
\begin{equation} \label{derivation_2}
\mathcal{H}(k, a, b, p) =  \frac{e^{-\frac{b^{2}}{2}}2^{k} \Gamma(k) }{(a^{2} + 2p)^{k}} \sum_{n=0}^{\infty}\sum_{l=0}^{\infty} \frac{ (1)_{n} (k)_{l} }{(1)_{n+l} } \frac{ \frac{b^{2n}}{2^{n}}}{n!} \frac{ \left( \frac{a^{2} b^{2}}{2a^{2} + 4p} \right)^{l}}{l!}.  
\end{equation}
Notably, the above expression can be expressed in terms of the Humbert hypergeometric function of the second kind yielding 
 
\begin{equation} \label{derivation_3}
\mathcal{H}(k, a, b, p) =  \frac{e^{-\frac{b^{2}}{2}}2^{k} \Gamma(k) }{(a^{2} + 2p)^{k}} \Phi_{2}\left( 1, k, 1; \frac{b^2}{2}, \frac{a^{2} b^{2}}{2 a^{2} + 4p}\right).   
\end{equation}
Inserting  \eqref{derivation_3} into  \eqref{derivation} yields \eqref{T_1},  concluding the proof. 
\end{proof}
\noindent

\subsection{Specific Cases of  $\mathcal{F}(k, m, a, b, p) $ and $\mathcal{G}(k, m, a, b, p) $}

Simple  expressions are  derived for  $k = 1$, $a = 0$ and $b=0$. 

\subsubsection{The case that $k=1$} In this special case   it follows that 

\begin{equation} \label{second_special_1}
\mathcal{G}(1, m, a, b, p) =   \int_{0}^{\infty}   Q_{m}(a, b\sqrt{x}) e^{-px} {\rm d}x
\end{equation} 
and
 
\begin{equation} \label{Integral_special_1}
\mathcal{F}(1, m, a, b, p) =   \int_{0}^{\infty}  Q_{m}(a\sqrt{x}, b) e^{-px}{\rm d}x.  
\end{equation}
Equation  \eqref{second_special_1} is given by \cite[eq. (16)]{J:Ran}. In the same context, 
a   generic  closed-form expression for \eqref{Integral_special_1} is derived below. 
\begin{lemma}
For $ a, m \in \mathbb{R}$ and  $  b, p \in \mathbb{R}^{+} $, the following closed-form expression is  valid\footnote{The integral in \cite[eq. (1)]{Ermolova} was recently evaluated in closed-form. However, this solution does not account for $\mathcal{F}(k, m, a, b, p)$ since the two integrals would be equal only when $I_{0}(0) = 1$. Yet,  this can be achieved  for $\mu_{2} = 1$ and $c=0$, which     eliminates  the power term and as a consequence, the integral reduces  to \eqref{Integral_special_1}, which is simply expressed in closed-form  in \eqref{special_1}.} 
 
\begin{equation} \label{special_1}
\mathcal{F}(1, m, a, b, p) =   \frac{\Gamma\left(m, \frac{b^{2}}{2} \right)}{p \Gamma(m)} +  \frac{ a^{2} e^{-\frac{pb^{2}}{a^{2} + 2p}}  \gamma\left(m, \frac{a^{2} b^{2}}{2 a^{2} + 4p} \right) }{p a^{2m }  \Gamma(m)  (a^{2} + 2p)^{1 - m}} 
\end{equation}
where $\gamma(a, x)$ denotes  the lower incomplete   gamma function. 
\end{lemma}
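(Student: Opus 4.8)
The plan is to obtain \eqref{special_1} as the $k=1$ specialization of the known result \eqref{crowncom}, rather than of \eqref{T_1}. This is the natural route: \eqref{crowncom} already holds for integer $k$ and \emph{arbitrary real} $m$, so it preserves the full range $m\in\mathbb{R}$ claimed in the lemma, whereas \eqref{T_1} would only deliver $m\in\mathbb{N}$. First I would set $k=1$ in \eqref{crowncom}. The finite sum $\sum_{l=0}^{k-1}$ then collapses to its single $l=0$ term, and with $\Gamma(1)=1$ the expression reduces to
\begin{equation*}
\mathcal{F}(1, m, a, b, p) = \frac{\Gamma(m, b^2/2)}{p\,\Gamma(m)} + \frac{a^2 b^{2m} e^{-b^2/2}}{m!\, p\, 2^m (a^2 + 2p)} \, _1F_1\!\left(1; m+1; \tfrac{a^2 b^2}{2a^2 + 4p}\right),
\end{equation*}
so the first summand already matches the leading term of \eqref{special_1}, and only the second summand needs reshaping.

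The key step is to reduce the surviving Kummer function to the lower incomplete gamma function through the standard identity $\gamma(s, z) = s^{-1} z^s e^{-z}\, _1F_1(1; s+1; z)$, equivalently $_1F_1(1; m+1; z) = m\, z^{-m} e^{z}\, \gamma(m, z)$, evaluated at $z = \tfrac{a^2 b^2}{2a^2 + 4p}$. Writing $m! = m\,\Gamma(m)$ then cancels the stray factor $m$ and leaves the incomplete gamma factor multiplied only by elementary powers and exponentials.

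The remainder is purely algebraic bookkeeping. I would consolidate the two exponentials using $-\tfrac{b^2}{2} + \tfrac{a^2 b^2}{2a^2 + 4p} = -\tfrac{p b^2}{a^2 + 2p}$, and combine the power factors via $z^{-m} = 2^m (a^2 + 2p)^m a^{-2m} b^{-2m}$, so that $\tfrac{b^{2m}}{2^m}\, z^{-m} = \tfrac{(a^2 + 2p)^m}{a^{2m}}$; together these turn the second summand into $\tfrac{a^2 e^{-pb^2/(a^2+2p)}\, \gamma(m, z)\, (a^2+2p)^{m-1}}{p\, a^{2m}\,\Gamma(m)}$, which is exactly the second term of \eqref{special_1} once $(a^2+2p)^{m-1}$ is rewritten as $(a^2+2p)^{-(1-m)}$. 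The main obstacle is not any single hard estimate but correctly recognizing and invoking the Kummer--incomplete-gamma reduction and then tracking the exponent and power consolidations without slips; I expect the exponent identity $-\tfrac{b^2}{2}+z = -\tfrac{pb^2}{a^2+2p}$ and the cancellation of $b^{2m}/2^m$ against $z^{-m}$ to be where care is most needed.
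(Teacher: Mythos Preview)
Your proposal is correct and follows essentially the same approach as the paper: set $k=1$ in \eqref{crowncom}, reduce $\,_1F_1(1;m+1;z)$ to a lower incomplete gamma function, and simplify the exponentials and powers. The only cosmetic difference is that the paper first rewrites $\gamma(m,z)=\Gamma(m)-\Gamma(m,z)$ to pass through the intermediate three-term form \eqref{specific_2} before recombining, whereas you go directly to \eqref{special_1}; your route is slightly cleaner but not genuinely different.
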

\begin{proof}
By setting $k=1$ in \eqref{crowncom}, it immediately follows that
 
\begin{equation} \label{specific_1}
\mathcal{F}(1, m, a, b, p) = \frac{ \Gamma\left(m, \frac{b^{2}}{2} \right)}{p  \Gamma(m)} 
 +  \frac{a^{2} b^{2m}   \, _{1}F_{1}\left( 1; m + 1; \frac{a^{2} b^{2}}{2 a^{2} + 4p} \right) }{m! p 2^{m  } (a^{2} + 2p) e^{\frac{b^{2}}{2}}} 
\end{equation}
Notably, using  the  following hypergeometric function identity
 
\begin{equation}
\,_{1}F_{1}(1; n; x) = x^{1 - n} (n - 1) e^{x} \gamma(n - 1, x) 
\end{equation}
and recalling that 

\begin{equation}
\gamma(a, x) = \Gamma(a) - \Gamma(a, x)
\end{equation}
and

\begin{equation}
\frac{m!}{m} = (m-1)! = \Gamma(m)
\end{equation}
  equation \eqref{specific_1} can be equivalently expressed as
 
\begin{equation} \label{specific_2}
\mathcal{F}(1, m, a, b, p) =   \frac{\Gamma\left(m, \frac{b^{2}}{2} \right)}{p \, \Gamma(m)} + \frac{ (a^{2} + 2p)^{m-1} e^{ - \frac{a^{2} b^{2}}{a^{2} + 2p}}}{ p \,  a^{2m - 2}  }  - \frac{   (a^{2} + 2p)^{m-1}e^{-\frac{a^{2} b^{2}}{a^{2} + 2p}}   }{ p \, a^{2m-2} \Gamma(m)   } \Gamma\left( m, \frac{a^{2} b^{2}}{2 a^{2} + 4p} \right).  
\end{equation}
To this effect and performing long but basic algebraic manipulations \eqref{specific_2} reduces to \eqref{special_1}, which completes the proof. 
\end{proof}

\begin{remark}
A similar expression can be  obtained through \eqref{T_1}.  
\end{remark}

\subsubsection{The case that $a=0$} In this special case it follows that
 
\begin{equation} \label{special_2}
\mathcal{G}(k, m, 0, b, p) =   \int_{0}^{\infty} x^{k-1} Q_{m}(0, b\sqrt{x}) e^{-px}{\rm d}x  
\end{equation}
and
 
\begin{equation} \label{special_5a}
\mathcal{F}(k, m, 0, b, p)  = Q_{m}(0, b) \int_{0}^{\infty} x^{k - 1}e^{-px} {\rm d}x. 
\end{equation}

\begin{lemma}
For $k, p \in \mathbb{R}^{+}$, $m \in \mathbb{N}$ and $b \in \mathbb{R}$,  the following closed-form representations hold
 
\begin{equation} \label{special_3}
\mathcal{G}(k, m, 0, b, p) = \frac{2^k}{(b^{2} + 2p)^{k}}\sum_{l = 0}^{m - 1} \frac{b^{2l}  \Gamma(k+l)}{l! (b^{2} + 2p)^{l}}  
\end{equation}
and
 
\begin{equation}\label{special_5}
\mathcal{F}(k, m, 0, b, p)  = \frac{\Gamma(k) }{p^{k}} e^{-\frac{b^{2}}{2}}  \sum_{l = 0}^{m-1} \frac{b^{2l}  }{l! 2^{l}  }.  
\end{equation}
\end{lemma}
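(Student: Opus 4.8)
The plan is to exploit the fact that, for integer $m$, the Marcum $Q$-function with a vanishing first argument collapses to a finite exponential-polynomial sum. Concretely, the starting point I would establish is the identity
\begin{equation} \label{key_Q0}
Q_{m}(0, y) = \frac{\Gamma\!\left(m, \frac{y^2}{2}\right)}{\Gamma(m)} = e^{-\frac{y^2}{2}} \sum_{l=0}^{m-1} \frac{1}{l!}\left(\frac{y^2}{2}\right)^{l},
\end{equation}
valid for $m \in \mathbb{N}$. The first equality follows by letting $a \to 0$ in the integral definition of $Q_{m}$ and using the small-argument behaviour $I_{m-1}(ax) \sim (ax/2)^{m-1}/\Gamma(m)$, while the second equality is the standard finite-sum representation of the upper incomplete gamma function at integer order.

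For the evaluation of $\mathcal{G}(k,m,0,b,p)$ I would substitute $y = b\sqrt{x}$ in \eqref{key_Q0}, so that \eqref{special_2} becomes
\begin{equation}
\mathcal{G}(k, m, 0, b, p) = \sum_{l=0}^{m-1} \frac{1}{l!}\left(\frac{b^{2}}{2}\right)^{l} \int_{0}^{\infty} x^{k+l-1}\, e^{-\left(p + \frac{b^{2}}{2}\right)x}\, {\rm d}x.
\end{equation}
Because the sum is finite, the interchange of summation and integration is immediate, and each term is an elementary Euler integral equal to $\Gamma(k+l)/(p + b^{2}/2)^{k+l}$. Collecting the factors and rewriting $(p + b^{2}/2)^{k+l} = (b^{2}+2p)^{k+l}/2^{k+l}$ then yields \eqref{special_3}, once the powers of $2$ cancel down to the overall prefactor $2^{k}$.

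For $\mathcal{F}(k,m,0,b,p)$ the computation is even shorter: since $Q_{m}(0,b)$ does not depend on $x$, it factors out exactly as displayed in \eqref{special_5a}, leaving the single Euler integral $\int_{0}^{\infty} x^{k-1} e^{-px}\, {\rm d}x = \Gamma(k)/p^{k}$. Substituting the series form of \eqref{key_Q0} with $y=b$ gives \eqref{special_5} directly.

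I do not anticipate a genuine obstacle here: once \eqref{key_Q0} is in hand, both results reduce to elementary gamma-function integrations, and the only care required is the bookkeeping of the powers of $2$ in $\mathcal{G}$. As an independent check, one could instead set $a=0$ in Theorem~1: using $\Phi_{1}(k,1,1;z,0) = {}_{2}F_{1}(k,1;1;z) = (1-z)^{-k}$ together with ${}_{1}F_{1}(k+n;n+1;0)=1$, the first two terms of \eqref{G_1} cancel and the residual sum reproduces \eqref{special_3} via $(k)_{n}\Gamma(k)=\Gamma(k+n)$, thereby confirming the result for all $k \in \mathbb{R}^{+}$ without restricting $k$ to integers.
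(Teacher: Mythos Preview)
Your proposal is correct and follows essentially the same route as the paper: both arguments invoke the finite-sum representation $Q_{m}(0,y)=e^{-y^{2}/2}\sum_{l=0}^{m-1}(y^{2}/2)^{l}/l!$ (the paper cites it as \cite[eq.~(2)]{J:Nuttall_2}) and then reduce each case to an elementary Euler gamma integral. Your closing consistency check via Theorem~1 with $a=0$ is a pleasant addition that the paper does not include.
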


\begin{proof}

Applying \cite[eq. (2)]{J:Nuttall_2} in \eqref{special_2}  one obtains
 
\begin{equation} \label{special_4}
\mathcal{G}(k, m, 0, b, p) = \sum_{l=0}^{m-1} \frac{b^{2l}}{l! 2^{l}} \int_{0}^{\infty} x^{k + l - 1} e^{-x\left( p + \frac{b^{2}}{2} \right)} {\rm d}x. 
\end{equation}
Evidently, the above integral can be expressed in closed-form in terms of the Euler gamma function.  This is also the case for $\mathcal{F}(k, m, 0, b, p) $ since $Q_{m}(0, b)$  is not a part of the integrand. As a result, by performing a necessary change of variables in \cite[eq. (8.310.1)]{B:Ryzhik} and substituting  \eqref{special_4} and \eqref{special_2} one obtains \eqref{special_3} and \eqref{special_5a}, respectively, which completes the proof.  
\end{proof}

\subsubsection{The case that $b=0$}  In this special case it follows that 
 
\begin{equation} \label{b=0_1}
\mathcal{G}(k, m, a, 0, p)    = Q_{m}(a, 0)\int_{0}^{\infty} x^{k - 1} e^{-px} {\rm d}x
\end{equation}
and
 
\begin{equation} \label{b=0_2}
  \mathcal{F}(k, m, a, 0, p)  = \int_{0}^{\infty} x^{k - 1} Q_{m}(a \sqrt{x},0) e^{-px} {\rm d}x. 
\end{equation}

\begin{lemma}
For $p \in \mathbb{R}^{+}$ and $a, m, k \in \mathbb{R}$, the following simple closed-form expression is valid 
 
\begin{equation} \label{special_c}
\mathcal{G}(k, m, a, 0, p)  =  \mathcal{F}(k, m, a, 0, p)   = \frac{\Gamma(k)}{p^{k}} . 
\end{equation}
\end{lemma}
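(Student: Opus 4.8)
The plan is to exploit the elementary fact that the generalized Marcum $Q$-function equals unity whenever its second argument vanishes. First I would dispose of $\mathcal{G}$: starting from \eqref{b=0_1}, the identity $Q_{m}(a,0)=1$ (already recalled in the proof of Theorem~2) lets me drop the Marcum factor, since with $b=0$ its second argument is identically zero, independent of $x$, and the function has already been pulled outside the integral. What remains is the elementary integral $\int_{0}^{\infty} x^{k-1} e^{-px}\,{\rm d}x$.

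Next I would treat $\mathcal{F}$. The key observation is that in \eqref{b=0_2} the second argument of $Q_{m}(a\sqrt{x},0)$ vanishes for every $x$, even though the first argument $a\sqrt{x}$ still depends on $x$. Invoking the general property $Q_{m}(\alpha,0)=1$, valid for \emph{arbitrary} first argument $\alpha$, the integrand collapses to $x^{k-1} e^{-px}$ as well. Consequently both $\mathcal{G}(k,m,a,0,p)$ and $\mathcal{F}(k,m,a,0,p)$ reduce to one and the same integral, which already establishes their equality.

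Finally, I would evaluate $\int_{0}^{\infty} x^{k-1} e^{-px}\,{\rm d}x$ by the substitution $u=px$, turning it into $p^{-k}\int_{0}^{\infty} u^{k-1} e^{-u}\,{\rm d}u = \Gamma(k)/p^{k}$ through the integral representation of the Euler gamma function in \cite[eq. (8.310.1)]{B:Ryzhik}. This yields \eqref{special_c} for both quantities at once.

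There is essentially no obstacle here, this being the simplest of the stated results; the only point worth flagging is that the identity needed for $\mathcal{F}$ is the \emph{stronger} form $Q_{m}(\alpha,0)=1$ for arbitrary $\alpha$, not merely the particular value $Q_{m}(a,0)$ used for $\mathcal{G}$, because for $\mathcal{F}$ the first argument varies with $x$. Once this unit-value property is applied to both integrands, the remaining step is a routine gamma-function evaluation.
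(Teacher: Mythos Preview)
Your argument is correct and follows essentially the same route as the paper: invoke the identity $Q_{m}(\alpha,0)=1$ from \cite[eq.~(1)]{J:Nuttall_2} to collapse both integrands to $x^{k-1}e^{-px}$, then evaluate via \cite[eq.~(8.310.1)]{B:Ryzhik}. Your extra remark that $\mathcal{F}$ requires the identity for \emph{arbitrary} first argument (since $a\sqrt{x}$ varies with $x$) is a worthwhile clarification that the paper leaves implicit.
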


\begin{proof}
The proof follows with the aid of  the identity  $Q_{m}(a, 0) \triangleq 1$,  in \cite[eq. (1)]{J:Nuttall_2} as well as  \cite[eq. (8.310.1)]{B:Ryzhik}. 
\end{proof}

 \section{Applications in Wireless Communications}
 
As already mentioned, the derived expressions for \eqref{first} and \eqref{Integral} can be used in applications relating to  natural sciences and engineering, including wireless communications and signal processing. Based on this, they are employed  in the derivation of simple expressions for the detection of unknown signals in cognitive radio and radar systems as well as  for the  channel capacity using  switched diversity.  To this end,  a  closed-form expression is firstly derived for the average probability of detection over  Nakagami$-m$ fading channels. Likewise,  a closed-form expression  is derived for the channel  capacity with channel inversion and fixed rate in switch-and-stay combining (SSC) under  correlated Nakagami$-m$ fading conditions.

   \begin{figure}[ht]
\centering
   \includegraphics[width =12cm, height =9cm] {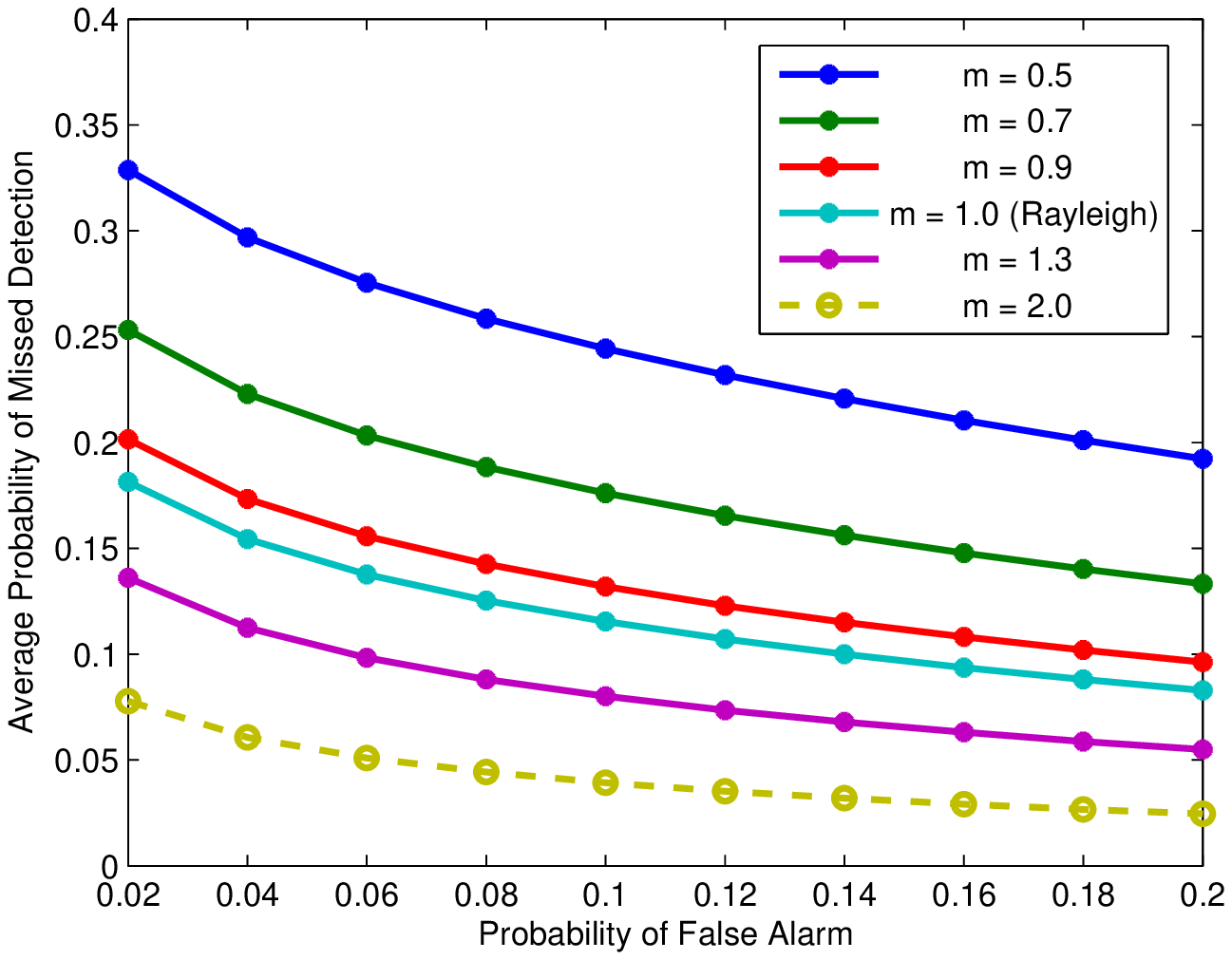}
 \label{myfigure1}
\caption{ ROC curve for $u = 5$, $\overline{\gamma} = 15$dB and different values of $m$. }
\end{figure}

 \subsection{Energy detection over Nakagami$-m$ fading channels with arbitrary  values of $m$}

The  detection  of unknown signals is  modeled as a binary hypothesis-testing problem, where $H_0$ and $H_1$ denote the cases that a  signal  is absent or present, respectively. The corresponding test statistic is typically represented by the central chi-square and the non-central chi-square distributions, respectively,  and is compared with an energy threshold, $\lambda$ \cite{J:Alouini}.  
\newpage 
 \begin{corollary}
 For $\overline{\gamma}, \lambda \in \mathbb{R}^{+}$, and either $m \geq 0.5$ and $u \in \mathbb{N}$, or $m \in \mathbb{N}$ and $u \in \mathbb{R}^{+}$, the average probability of detection over Nakagami$-m$ fading channels can be expressed as  
 
 \begin{equation} \label{app}
 \overline{P}_d = \frac{m^m}{\overline{\gamma}^{m} \Gamma(m)} \mathcal{F}\left(m, u, \sqrt{2}, \sqrt{\lambda}, \frac{m}{\overline{\gamma}}\right). 
 \end{equation}
 \end{corollary}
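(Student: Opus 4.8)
The plan is to begin from the standard conditional detection probability in energy detection over a non-fading channel. For a time--bandwidth product $u$ and energy threshold $\lambda$, the probability of declaring a signal present when the instantaneous SNR equals $\gamma$ is the generalized Marcum $Q$-function $P_d(\gamma) = Q_u(\sqrt{2\gamma}, \sqrt{\lambda})$, with non-centrality $\sqrt{2\gamma}$ \cite{J:Alouini}. The average probability of detection over a fading channel is then obtained as the statistical expectation of $P_d(\gamma)$ with respect to the SNR distribution, i.e. $\overline{P}_d = \int_0^\infty Q_u(\sqrt{2\gamma}, \sqrt{\lambda})\, f_\gamma(\gamma)\, d\gamma$.

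First I would insert the Nakagami$-m$ SNR density, which is the Gamma density $f_\gamma(\gamma) = \frac{m^m}{\overline{\gamma}^m \Gamma(m)} \gamma^{m-1} e^{-m\gamma/\overline{\gamma}}$, and pull the $\gamma$-independent constants outside the integral. This leaves $\overline{P}_d = \frac{m^m}{\overline{\gamma}^m \Gamma(m)} \int_0^\infty \gamma^{m-1} Q_u(\sqrt{2}\sqrt{\gamma}, \sqrt{\lambda})\, e^{-(m/\overline{\gamma})\gamma}\, d\gamma$, where I have written $\sqrt{2\gamma} = \sqrt{2}\sqrt{\gamma}$ to expose the scaled-argument structure.

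The key step is to recognize the remaining integral as an instance of $\mathcal{F}$ defined in \eqref{Integral}. Matching integrands slot by slot, I identify the dummy variable $x \leftrightarrow \gamma$, the power index $k \leftrightarrow m$ (the Nakagami parameter), the first-argument coefficient $a = \sqrt{2}$ so that $a\sqrt{x} = \sqrt{2\gamma}$, the second argument $b = \sqrt{\lambda}$, the exponential rate $p = m/\overline{\gamma}$, and, crucially, the Marcum order occupying the second slot of $\mathcal{F}$ as $u$. This substitution yields exactly $\overline{P}_d = \frac{m^m}{\overline{\gamma}^m \Gamma(m)} \mathcal{F}(m, u, \sqrt{2}, \sqrt{\lambda}, m/\overline{\gamma})$, which is \eqref{app}.

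The point that requires the most care is the notational collision: the Nakagami shape parameter $m$ occupies the role of the power index $k$ in $\mathcal{F}(k,\cdot,\cdot,\cdot,\cdot)$, whereas the detector's time--bandwidth product $u$ occupies the Marcum-order slot. Keeping these roles straight also accounts for the two parameter regimes in the hypothesis: when $u \in \mathbb{N}$ the Marcum order is a natural number and the closed form follows from Theorem 3, namely \eqref{T_1}, which is valid for arbitrary real power index $k = m \geq 0.5$; when instead $m \in \mathbb{N}$ the power index $k = m$ is an integer and the closed form follows from \eqref{crowncom}, which holds for arbitrary real Marcum order $u$. Beyond this bookkeeping the argument is a direct substitution, so I do not expect any genuine analytic obstacle.
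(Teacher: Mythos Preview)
Your proposal is correct and follows essentially the same route as the paper's proof: start from the conditional detection probability $P_d(\gamma)=Q_u(\sqrt{2\gamma},\sqrt{\lambda})$, average against the Nakagami$-m$ SNR density, and recognize the resulting integral as $\mathcal{F}(m,u,\sqrt{2},\sqrt{\lambda},m/\overline{\gamma})$, invoking \eqref{crowncom} or \eqref{T_1} according to whether $m$ or $u$ is the integer. Your explicit disentangling of the $k\leftrightarrow m$ and Marcum-order $\leftrightarrow u$ roles is more careful than the paper's terse argument, but the substance is identical.
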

 
 \begin{proof}
The  probability of false alarm and probability of detection  in  additive white Gaussian noise  are given by  $P_{f} = \Gamma(u, \lambda {/}2)$ and $P_{d} = Q_{u}(\sqrt{2\gamma}, \sqrt{\lambda})$, respectively, where $u$ and $\gamma$ denote the time-bandwidth product and the instantaneous signal-to-noise ratio (SNR), respectively   \cite{Add, Ermolova, J:Janti}. It is recalled that in  energy detection  over fading channels, the $P_d$ is averaged over the fading statistics.  To this effect,  for the case Nakagami$-m$ fading channels in  \cite[eq. (2. 21)]{B:Alouini} is represented as follows
 
\begin{equation} \label{app_1}
\overline{P}_d =    \frac{m^{m} }{\overline{\gamma}^{m}  \Gamma(m)} \int_{0}^{\infty }  \gamma^{m-1} Q_{u}(\sqrt{2\gamma}, \sqrt{\lambda}) e^{-\frac{m \gamma}{\overline{\gamma}}}   {\rm d}\gamma.  
\end{equation}
Evidently, the above integral can be expressed in terms of \eqref{crowncom} or \eqref{T_1}.  This yields  \eqref{app},  which completes the proof.  
 \end{proof}
 
Notably, the offered expression can account for arbitrary values of $m$, contrary to existing analyses that assume integer values of $m$ for simplicity. Fig. $1$ illustrates the corresponding probability of missed detection versus probability of false alarm ROC curve for different values of $m$. One can notice the sensitivity of $m$, particularly for small values, and thus, the usefulness of the offered expression also in practical scenarios.

\subsection{Capacity with channel inversion and fixed rate over correlated Nakagami$-m$ fading  using switch-and-stay combining}

Channel capacity under different transmission policies is  particularly useful in achieving  certain quality of service requirements. In this context, channel inversion with fixed rate (CIFR) has been rather useful as it ensures a constant SNR at the receiver through  adaptation of the transmit power.  This method relies on  fixed-rate modulation and fixed code design, which renders its implementation  relatively simple \cite{B:Alouini}.

\begin{figure}[ht]
\centering
   \includegraphics[width =12cm, height =9cm] {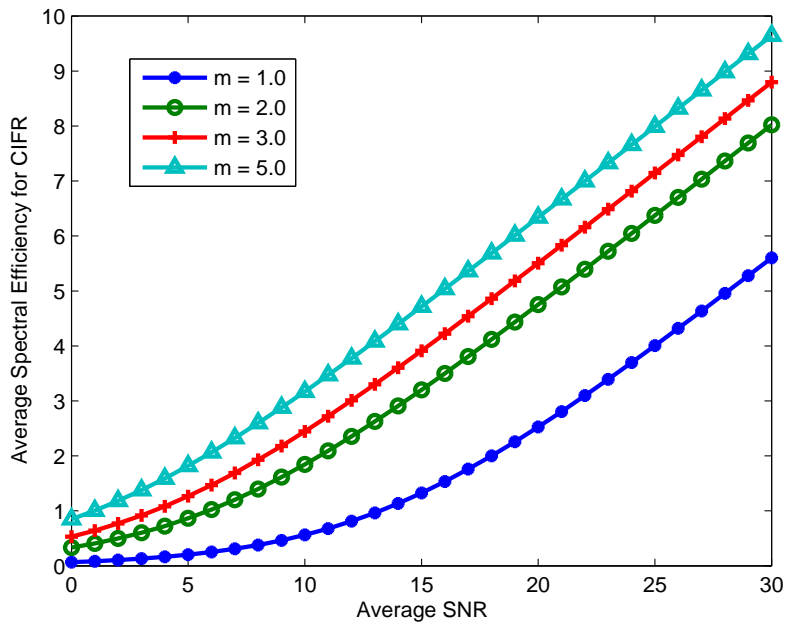}
 \label{myfigure1}
\caption{ Average SE vs $\overline{\gamma}$ for CIFR under correlated Nakagami$-m$ fading with SSC for $\gamma_T = 0$dB, $\rho = 0.5$ and different values of $m$.}
\end{figure}

\begin{corollary}
For    $\overline{\gamma}, B \in \mathbb{R}^{+}$,  $m \geq \frac{1}{2}$ and $0 \leq \rho < 1$, the  capacity  with channel inversion and  fixed rate over  correlated Nakagami$-m$ fading channels with SSC can be expressed as
 
\begin{equation} \label{C_CIFR}
C_{\rm CIFR} = B \log_{2}\left(1 + \frac{1}{\mathcal{R}(m, \overline{\gamma}, \gamma_{T}, \rho)} \right) 
\end{equation}
where 
 
\begin{equation}
\mathcal{R}(m, \overline{\gamma}, \gamma_{T}, \rho) =   \frac{m}{\overline{\gamma} \Gamma(m)} \Gamma\left( m - 1, \frac{m \gamma_{T}}{\overline{\gamma}} \right)   -   \frac{m^m}{\overline{\gamma}^{m} \Gamma(m)} \mathcal{F}\left(m-1, m,  \sqrt{\frac{2 m \rho}{(1 - \rho) \overline{\gamma}}}, \sqrt{\frac{2 m \gamma_{T}}{(1 - \rho) \overline{\gamma}}}, \frac{m}{\overline{\gamma}} \right)   
\end{equation}
with  $\rho$ and $\gamma_T$ denoting the  correlation coefficient and the predetermined SNR switching threshold, respectively. 
\end{corollary}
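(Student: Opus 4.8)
The plan is to recognize that the CIFR capacity in \eqref{C_CIFR} is governed entirely by the single quantity $\mathcal{R}=\mathbb{E}[1/\gamma_{\rm ssc}]$, the expected inverse of the SNR at the output of the switch-and-stay combiner. For channel inversion with fixed rate the transmitter adapts its power so as to keep the received SNR constant, so that $C_{\rm CIFR}=B\log_2(1+1/\mathbb{E}[1/\gamma_{\rm ssc}])$, and it suffices to show that $\mathbb{E}[1/\gamma_{\rm ssc}]$ equals $\mathcal{R}(m,\overline{\gamma},\gamma_{T},\rho)$. First I would therefore reduce the problem to evaluating $\int_{0}^{\infty}\gamma^{-1}p_{\gamma_{\rm ssc}}(\gamma)\,\mathrm{d}\gamma$, where $p_{\gamma_{\rm ssc}}$ is the output density of the combiner for two identically distributed, $\rho$-correlated Nakagami$-m$ branches.

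Second, I would assemble $p_{\gamma_{\rm ssc}}$. Writing $p_{\gamma}(\gamma)=\frac{m^{m}}{\overline{\gamma}^{m}\Gamma(m)}\gamma^{m-1}e^{-m\gamma/\overline{\gamma}}$ for the marginal Nakagami$-m$ SNR density and $p_{\gamma_{1},\gamma_{2}}$ for the bivariate (Kibble) gamma density with power-correlation $\rho$, the standard dual-branch SSC construction produces a two-region density: below the switching threshold the output is supplied by the post-switch branch (conditioned on $\gamma_{1}<\gamma_{T}$), while above the threshold it is either the retained branch or the post-switch branch. The decisive step is to evaluate the conditional tail $\Pr[\gamma_{1}>\gamma_{T}\mid\gamma_{2}=\gamma]$: substituting the Bessel-function kernel $I_{m-1}$ of $p_{\gamma_{1},\gamma_{2}}$ and rescaling $\gamma_{1}\mapsto t=\sqrt{2m\gamma_{1}/((1-\rho)\overline{\gamma})}$ casts this conditional probability exactly into the Marcum form $Q_{m}(\sqrt{2m\rho\gamma/((1-\rho)\overline{\gamma})},\,\sqrt{2m\gamma_{T}/((1-\rho)\overline{\gamma})})$. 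This is where the Marcum $Q-$function enters and is, I expect, the main obstacle, since it requires correctly matching the two arguments and verifying the normalization of the combiner density across the two regions.

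Third, I would insert this density into $\mathbb{E}[1/\gamma_{\rm ssc}]$ and separate it into two pieces. The contribution of the retained-branch term $p_{\gamma}(\gamma)$ on $[\gamma_{T},\infty)$ yields, after the substitution $u=m\gamma/\overline{\gamma}$, the incomplete-gamma term $\frac{m}{\overline{\gamma}\Gamma(m)}\Gamma(m-1,m\gamma_{T}/\overline{\gamma})$. The contribution carrying the conditional factor $1-Q_{m}(\cdot)$ separates into a Marcum-free part and a Marcum part; using $\gamma^{-1}p_{\gamma}(\gamma)=\frac{m^{m}}{\overline{\gamma}^{m}\Gamma(m)}\gamma^{m-2}e^{-m\gamma/\overline{\gamma}}$, the latter is a constant multiple of $\int_{0}^{\infty}\gamma^{m-2}Q_{m}(a\sqrt{\gamma},b)\,e^{-p\gamma}\,\mathrm{d}\gamma$ with $a=\sqrt{2m\rho/((1-\rho)\overline{\gamma})}$, $b=\sqrt{2m\gamma_{T}/((1-\rho)\overline{\gamma})}$ and $p=m/\overline{\gamma}$. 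Comparing with \eqref{Integral}, this integral is precisely $\frac{m^{m}}{\overline{\gamma}^{m}\Gamma(m)}\mathcal{F}(m-1,m,a,b,p)$, i.e. an instance of $\mathcal{F}(k,m,a,b,p)$ at $k=m-1$.

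Finally, since $m\in\mathbb{N}$ here, I would invoke Theorem 3, eq. \eqref{T_1} (or equivalently \eqref{crowncom}), to replace $\mathcal{F}(m-1,m,a,b,p)$ by its closed form, thereby obtaining $\mathcal{R}(m,\overline{\gamma},\gamma_{T},\rho)$ and hence \eqref{C_CIFR}. The remaining work—carrying the multiplicative constants through and checking that the stated parameter ranges guarantee convergence of $\int_{0}^{\infty}\gamma^{-1}p_{\gamma_{\rm ssc}}(\gamma)\,\mathrm{d}\gamma$ at the origin—is routine. The genuinely delicate part stays in the second step: producing the correct two-region SSC density and reducing the bivariate-Nakagami conditional to the Marcum $Q-$function with the exact arguments appearing in the statement.
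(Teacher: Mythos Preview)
Your plan is essentially the paper's own argument: reduce $C_{\rm CIFR}$ to $\mathcal{R}=\int_{0}^{\infty}\gamma^{-1}p_{\gamma_{\rm ssc}}(\gamma)\,\mathrm{d}\gamma$, use the two-region SSC output density for correlated Nakagami$-m$ branches, and recognise the resulting Marcum integral as $\mathcal{F}(m-1,m,a,b,p)$ with the parameters you list. The only substantive difference is that you reconstruct the SSC density and the reduction of the bivariate-Nakagami conditional to $Q_m$ from first principles, whereas the paper simply quotes \cite[eqs.~(9.334)--(9.335)]{B:Alouini}; your ``first two integrals'' (the $[\gamma_T,\infty)$ piece and the Marcum-free part of the $1-Q_m$ contribution) correspond exactly to what the paper handles via gamma functions before identifying the remaining piece with \eqref{Integral}.

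One small overreach: your final step, invoking Theorem~3 ``since $m\in\mathbb{N}$ here'', is neither needed nor consistent with the hypothesis $m\ge\tfrac12$. The corollary as stated stops at expressing $\mathcal{R}$ through $\mathcal{F}$; it does not substitute a closed form for $\mathcal{F}$, so you should end once the identification with $\mathcal{F}(m-1,m,a,b,p)$ is made.
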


\begin{proof}
The CIFR over fading channels is defined as  \cite{J:Goldsmith}  
 
\begin{equation}
C_{\rm CIFR} = B\log_{2}\left(1 + \frac{1}{\int_{0}^{\infty} \frac{p_{\gamma}(\gamma)}{\gamma} {\rm d}\gamma} \right). 
\end{equation}

\noindent 
In the case of switched diversity and correlated Nakagami$-m$ fading, the  PDF of the SSC output is given by \cite[eq. (9.334)]{B:Alouini}. 
By also setting

\begin{equation}
 \mathcal{R} = \int_{0}^{\infty} \frac{p_{\gamma_{\rm SSC}}(\gamma)}{\gamma} \,  {\rm d}\gamma
 \end{equation}
   it follows that  
   
   \begin{equation}
\mathcal{R}= \int_{0}^{\infty}A(\gamma){\rm d}\gamma + \int_{\gamma_T}^{\infty}p_{\gamma}(\gamma) {\rm d}\gamma
 \end{equation}
     where $A(\gamma)$ is given in \cite[eq. (9.335)]{B:Alouini}.  To this effect and  using \cite[eq. (2. 21)]{B:Alouini} one obtains 
 
\begin{equation} \label{SSC_2}
\begin{split}
\mathcal{R}(m, \overline{\gamma}, \gamma_{T}, \rho)  =& \frac{m^m}{\overline{\gamma}^{m} \Gamma(m) } \int_{\gamma_T}^{\infty} \gamma^{m-2}e^{-\frac{m \gamma}{\overline{\gamma}}} \, {\rm d}\gamma   \\
&  - m^{m}\int_{0}^{\infty} \frac{\gamma^{m-2} \, e^{-\frac{m \gamma}{\overline{\gamma}}} }{\overline{\gamma}^{m}\Gamma(m)  } Q_{m}\left( \sqrt{\frac{2 m \rho \gamma}{(1 - \rho)\overline{\gamma}}},  \sqrt{\frac{2 m  \gamma_T}{(1 - \rho)\overline{\gamma}}}\right)  {\rm d}\gamma. 
\end{split}
\end{equation}
The first two integrals in \eqref{SSC_2} can be expressed in terms of the gamma functions, whereas the third integral has the algebraic form of \eqref{Integral}. Therefore, by performing the necessary change of variables yields \eqref{C_CIFR}, which completes the proof. 
\end{proof}
The behavior of the corresponding average spectral efficiency versus average SNR is illustrated in Fig. 2 for  different values of $m$  with fixed values of $\gamma_T$ and $\rho$. The significant effect of the severity of fading on   $C_{\rm CIFR}$ is clearly observed.

\section{Conclusion}
\label{conc}
Novel closed-form expressions were derived for two Marcum $Q-$function integrals which are both simple and generic. Simple analytic expressions for   involved special cases were also derived in closed-form. These expressions are tractable and are expected to be useful in analyses relating to natural sciences and engineering, including wireless  communications and signal processing. To this end, they were employed in the analysis of energy detection in RADAR and cognitive radio systems as well as in  the channel   capacity with channel inversion and fixed rate over correlated  multipath fading channels.

%\newpage 

\bibliographystyle{IEEEtran}
\thebibliography{99}

\bibitem{J:Marcum_2} 
J. I. Marcum, 
``Table of Q-functions, U.S. Air Force Project RAND Res. Memo. M-339, ASTIA document AD 1165451," 
\emph{RAND Corp.}, Santa Monica, CA, 1950.

\bibitem{J:Nuttall_2} %3
A. H. Nuttall, 
\emph{ Some integrals involving the $Q_{M}{-}$function,}
Naval underwater systems center, New London Lab, New London, CT,  1974.

\bibitem{J:Simon}
M. K. Simon, and M.-S. Alouini,
``Some new results for integrals involving the generalized Marcum $Q{-}$function and their application to performance evaluation over fading channels,"
 \emph{IEEE Trans. Wireless Commun.}, vol. 2, no. 4, pp. 611$-$615, July 2003.

\bibitem{B:Alouini}
M. K. Simon, and M.-S. Alouini,
\emph{Digital communication over fading channels,}
Wiley, New York, 2005.

\bibitem{J:Brychkov2012}  
Yu. A. Brychkov,
``On some properties of the Marcum $Q{-}$function,"
\emph{Integral Transforms and Special Functions}, vol. 23, no. 3, pp. 177${-}$182, Mar. 2012.

\bibitem{Bao} 
T. Q. Duong, D. B. da Costa, M.  Elkashlan,  and V. N. Q. Bao,
``Cognitive amplify-and-forward relay networks over Nakagami$-m$ fading,''
\emph{IEEE Trans. Veh. Technol.}, vol. 61, no. 5, pp. 2368${-}$2374, May 2012. 

\bibitem{J:Himal} 
P. J. Smith, P. A.  Dmochowski, H. A. Suraweera,   M. Shafi, 
``The effects of limited channel knowledge on cognitive radio system capacity,'' 
\emph{IEEE Trans. Veh. Technol.}, vol. 62, no. 2, pp. 927${-}$933, Feb. 2013. 

\bibitem{Ding} 
Z. Zhao, Z. Ding, M. Peng,  W. Wang, and J. Thompson, 
``On the design of cognitive radio inspired asymmetric network coding transmissions in MIMO systems,''
\emph{IEEE Trans. Veh. Technol.}, vol. 64, no. 3, pp. 1014$-$1025, Mar. 2015.

  \bibitem{J:Alouini}
F. F. Digham, M. S. Alouini, and M. K. Simon,
``On the energy detection of unknown signals over fading channels," 
\emph{IEEE Trans. Commun}. vol. 55, no. 1, pp. 21${-}$24, Jan. 2007.

\bibitem{New_1}
P. C. Sofotasios, and S. Freear, 
``Novel expressions for the one and two dimensional Gaussian $Q-$functions,''
\emph{In Proc. ICWITS  `10}, Honolulu, HI, USA, Aug. 2010. pp. 1$-$4. 

\bibitem{New_2}
P. C. Sofotasios, and S. Freear, 
``A novel representation for the Nuttall $Q-$function,''
\emph{in Proc. ICWITS `10}, Honolulu, HI, USA, Aug. 2010.

\bibitem{New_3}
 P. C. Sofotasios,  and S. Freear, 
 ``Novel expressions for the Marcum and one dimensional $Q-$functions,''
 \emph{in Proc.  $7^{\rm th}$ ISWCS `10, York, UK, Sep. 2010}, pp. 736$-$740.

\bibitem{Math_7}
Yu. A. Brychkov,
``Evaluation of some classes of definite and indefinite integrals," 
\emph{Integral Transforms and Special Functions}, vol. 13, no. 2,  pp. 163${-}$167, Oct. 2010.

\bibitem{New_4}
P. C. Sofotasios, and S. Freear, 
``Simple and accurate approximations for the two dimensional Gaussian $Q-$function,'' \emph{in Proc. IEEE VTC-Spring `11}, Budapest, Hungary, May 2011, pp. 1$-$4.

 \bibitem{New_5}
 P. C. Sofotasios, and S. Freear, 
 ``Novel results for the incomplete Toronto function and incomplete Lipschitz-Hankel integrals,''
 \emph{in Proc.  IEEE IMOC  `11}, Natal, Brazil, Oct. 2011, pp. 44$-$47.

 \bibitem{J:Ran}
G. Cui, L. Kong, X. Yang, and D. Ran,
``Two useful integrals involving generalised Marcum $Q{-}$function,"
 \emph{IET Electronic Letters}, vol. 48, no. 16, Aug. 2012.

  \bibitem{New_6}
  P. C. Sofotasios, S. Freear, 
  ``Upper and lower bounds for the Rice $Ie$ function,''
  \emph{in Proc. ATNAC  `11}, Melbourne, Australia, Nov. 2011. 
 
  \bibitem{New_7}
 P. C. Sofotasios, and S. Freear, 
 ``New analytic expressions for the Rice $Ie-$function and the incomplete Lipschitz-Hankel integrals,''
 \emph{ IEEE INDICON `11}, Hyderabad, India, Dec. 2011, pp. 1$-$6. 

  \bibitem{New_8}
 P. C. Sofotasios, K. Ho-Van, T. D. Anh, and H. D. Quoc, 
 ``Analytic results for efficient computation of the Nuttall$-Q$ and incomplete Toronto functions,''
 \emph{ in Proc. IEEE ATC '13}, HoChiMinh City, Vietnam,  Oct. 2013, pp. 420$-$425.

 \bibitem{Ermolova}
N. Y. Ermolova,   O. Tirkkonen,
``Laplace transform of product of generalized Marcum $Q$, Bessel $I$, and power functions with applications,''
\emph{IEEE Trans. Signal Proc.}, vol. 62, no. 11, pp. 2938${-}$2944, Nov. 2014.

 \bibitem{Add} 
 P. C. Sofotasios, M. Valkama, Yu. A. Brychkov, T. A. Tsiftsis, S. Freear, and G. K. Karagiannidis, 
``Analytic solutions to a Marcum $Q-$function-based integral and application in energy detection,"
 \emph{in Proc. CROWNCOM `14}, Oulu, Finland,  June 2014, pp. 260${-}$265.

  \bibitem{New_9}
 P. C. Sofotasios, T. A. Tsiftsis, Yu. A. Brychkov, S. Freear, M. Valkama, and G. K. Karagiannidis, 
 ``Analytic expressions and bounds for special functions and applications in communication theory,''
 \emph{ IEEE Trans. Inf. Theory}, vol. 60, no. 12, pp. 7798$-$7823, Dec. 2014.

\bibitem{J:Janti} % 20
K. Ruttik, K. Koufos and R. Jantti, 
``Detection of unknown signals in a fading environment," 
\emph{IEEE Commun. Lett.}, vol. 13, no. 7, pp. 498${-}$500, July 2009.

\bibitem{C:Beaulieu} % 18
K. T. Hemachandra, and  N. C. Beaulieu,
``Novel analysis for performance evaluation of energy detection of unknown deterministic signals using dual diversity", 
\emph{in Proc. IEEE  VTC-Fall  2011}, San Fransisco, CA, USA,  pp. 1${-}$5.

\bibitem{J:Herath} % 17
S. P. Herath, N. Rajatheva, and C. Tellambura,
``Energy detection of unknown signals in fading and diversity reception," 
\emph{IEEE Trans. Commun.}, vol. 59, no. 9, pp. 2443${-}$2453, Sep. 2011.

 \bibitem{New_17}   
 K. Ho-Van, and P. C. Sofotasios, 
 ``Bit error rate of underlay multi-hop cognitive networks in the presence of multipath fading,'' 
 \emph{in Proc. ICUFN `13}, Da Nang, Vietnam, July 2013, pp. 620$-$624.

  \bibitem{New_18}   
 P. C. Sofotasios, E. Rebeiz, L. Zhang, T. Tsiftsis, D. Cabric, and S. Freear, 
 ``Energy detection-based spectrum sensing over generalized and extreme fading channels," 
 \emph{IEEE Trans. Veh. Technol.}, vol. 62, no. 3, pp. 1031$-$1040, Mar. 2013.

\bibitem{Add_4} 
K. Ho-Van, P. C. Sofotasios, S. V. Que, T. D. Anh, T. P. Quang, and L. P. Hong, 
``Analytic Performance Evaluation of Underlay Relay Cognitive Networks with Channel Estimation Errors,"
\emph{in Proc. IEEE ATC '13},  HoChiMing City, Vietnam, pp. 631${-}$636.

\bibitem{New_19}   
A. Gokceoglu, Y. Zhou, M. Valkama, and P. C. Sofotasios, 
``Multi-channel energy detection under phase noise: analysis and mitigation,'' \emph{ACM/Springer Journal on Mobile Networks and Applications (MONET)}, 
vol. 19, no. 4, pp. 473$-$486, Aug. 2014.

\bibitem{New_14}   
K. Ho-Van, and P. C. Sofotasios, 
``Exact BER analysis of underlay decode-and-forward multi-hop cognitive networks with estimation errors,''
\emph{IET Communications},
 vol. 7, no. 18, pp. 2122$-$2132, Dec. 2013.
 
 \bibitem{New_15}   
 P. C. Sofotasios, M. Fikadu, K. Ho-Van, and M. Valkama, ``Energy detection sensing of unknown signals over Weibull fading channels,''
 \emph{in Proc. IEEE ATC  `13}, HoChiMinh City, Vietnam, Oct. 2013, pp. 414$-$419.

 \bibitem{New_16}   
 K. Ho-Van, and P. C. Sofotasios, 
 ``Outage behaviour of cooperative underlay cognitive networks with inaccurate channel estimation,'' 
 \emph{in Proc. IEEE ICUFN `13},  Da Nang, Vietnam, July 2013, pp. 501$-$505.
 
 \bibitem{New_20}   
S. Dikmese, P. C. Sofotasios, T. Ihalainen, M. Renfors, and M. Valkama, 
``Efficient energy detection methods for spectrum sensing under non-flat spectral characteristics,''
\emph{IEEE J.  Sel. Areas Commun.}, vol. 33, no. 5, pp. 755$-$770, May 2015.

\bibitem{New_13}   
K. Ho-Van, P. C. Sofotasios, and S. Freear,
``Underlay cooperative cognitive networks with imperfect Nakagami$-m$ fading channel information and strict transmit power constraint: Interference statistics and outage probability analysis,''
\emph{ IEEE/KICS Journal of Communications and Networks},
 vol. 16, no. 1, pp. 10$-$17, Feb. 2014.

\bibitem{Maged}  %1
P. L. Yeoh, M. Elkashlan, T. Q. Duong, N. Yang, and D. B. da Costa,
``Transmit antenna selection for interference management in cognitive relay networks,''
\emph{IEEE Trans. Veh. Technol.}, vol. 63, no. 7, pp. 3250${-}$3262,  Sep. 2014. 

\bibitem{New_12}   
S. Dikmese, P. C. Sofotasios, M. Renfors, and M. Valkama, 
``Maximum-minimum energy based spectrum sensing under frequency selectivity for cognitive radios,''
\emph{in CROWNCOM `14}, Oulu, Finland, June 2014,  pp. 347$-$352.

\bibitem{New_11}   
P. C. Sofotasios, M. K. Fikadu, K. Ho-Van, M. Valkama, and G. K. Karagiannidis, 
``The area under a receiver operating characteristic curve over enriched multipath fading conditions,'' 
\emph{in Proc. IEEE Globecom `14}, Austin, TX, USA,  Dec. 2014, pp. 3090$-$3095.

\bibitem{New_10}   
K. Ho-Van, P. C. Sofotasios, G. C. Alexandropoulos, and S. Freear, 
``Bit error rate of underlay decode-and-forward cognitive networks with best relay selection,''
\emph{IEEE/KICS Journal of Communications and Networks}, vol. 17, no. 2, pp. 162$-$171, Apr. 2015.

\bibitem{Yury_1}
Yu. A. Brychkov, 	
\emph{Handbook of special functions: derivatives, integrals, series and other formulas,} 
CRC Press, Boca Raton, FL, USA, 2008.

\bibitem{B:Ryzhik} %25
I. S. Gradshteyn and I. M. Ryzhik, 
\emph{Table of Integrals, Series, and Products}, in $7^{th}$ ed.  Academic, New York, 2007.

\bibitem{J:Goldsmith} % 23
M-S. Alouini, and A. Goldsmith
``Capacity of Rayleigh fading channels under different adaptive transmission and diversity-combining techniques," 
\emph{IEEE Trans. Veh. Technol.}, vol. 48, no. 4, pp. 1165${-}$1181, July 1999.

\end{document}